\documentclass[aps,showpacs,pra,twocolumn,superscriptaddress]{revtex4-2}

\usepackage[utf8]{inputenc}
\usepackage[T1]{fontenc}
\usepackage[polish,english]{babel}
\usepackage{times}
\usepackage{amssymb}
\usepackage{mathtools}
\usepackage{amsthm}
\usepackage{MnSymbol}
\usepackage{dsfont}
\usepackage{bm}
\usepackage{textcomp}
\usepackage{xcolor}
\usepackage{physics}
\usepackage[shortlabels]{enumitem}
\usepackage{dcolumn}
\usepackage{graphicx}
\let\ChangesComment\comment
\usepackage{changes}
\let\comment\ChangesComment
\usepackage{comment}
\usepackage[colorlinks=true,citecolor=blue,urlcolor=blue]{hyperref}

\theoremstyle{plain}       
\newtheorem{lemma}{Lemma}
\newtheorem{fact}{Fact}
\newtheorem{theorem}{Theorem}

\definecolor{darkgreen}{rgb}{0,0.5,0}

\begin{document}

\title{On Non-Existence of Stabilizer Absolutely Maximally Entangled States in Even Local Dimensions}
\author{Jakub Wójcik}
\email{j.wojcik32@student.uw.edu.pl}
\affiliation{Center for Theoretical Physics, Polish Academy of Sciences, Aleja Lotnik\'{o}w 32/46, 02-668 Warsaw, Poland}
\affiliation{Center for Quantum-Enabled Computing, Center for Theoretical Physics, Polish Academy of Sciences, Aleja Lotnik\'{o}w 32/46, 02-668 Warsaw, Poland}
\author{Owidiusz Makuta}
\email{makuta@lorentz.leidenuniv.nl}
\affiliation{$\langle aQa ^L\rangle $ Advancing Quantum Algorithms, Universiteit Leiden}
\affiliation{Instituut-Lorentz, Universiteit Leiden, Niels Bohrweg 2, 2333 CA Leiden, Netherlands}
\date{March 10, 2025}
\author{Wojciech Bruzda}
\affiliation{Center for Theoretical Physics, Polish Academy of Sciences, Aleja Lotnik\'{o}w 32/46, 02-668 Warsaw, Poland}
\affiliation{Center for Quantum-Enabled Computing, Center for Theoretical Physics, Polish Academy of Sciences, Aleja Lotnik\'{o}w 32/46, 02-668 Warsaw, Poland}
\author{Remigiusz Augusiak}
\affiliation{Center for Theoretical Physics, Polish Academy of Sciences, Aleja Lotnik\'{o}w 32/46, 02-668 Warsaw, Poland}
\affiliation{Center for Quantum-Enabled Computing, Center for Theoretical Physics, Polish Academy of Sciences, Aleja Lotnik\'{o}w 32/46, 02-668 Warsaw, Poland}

\date{\today}

\begin{abstract}
We demonstrate that absolutely maximally entangled (AME) states consisting of $N=4n$ qudits with $n\in\{1,2,3,...\}$, each of even local dimension, cannot be realized as graph states. This result imposes strong constraints on AME states in composite local dimensions and characterizes the limitations of graph-state constructions for highly entangled multipartite quantum systems. In particular, this study provides an independent solution of the recently discussed case of the AME state of four quhexes and clarifies its characterization within the stabilizer formalism, complementing the results found recently in [H. Cha, \href{https://arxiv.org/pdf/2603.13442}{arXiv:2603.13442}]. At the same time, we provide a general construction for mixed $k$-uniform states whose purity is determined by the optimal stabilizer representations. For the specific case of $(N=4,d=6)$, this yields a mixed AME state of optimal purity $1/2$, not subject to canonical graph-state constraints.
\end{abstract}

\maketitle

\section{Introduction}

Absolutely maximally entangled (AME) states, defined as pure multipartite quantum states whose every balanced reduction is maximally mixed, represent one of the most extreme forms of multipartite entanglement~\cite{Goyeneche2015}. These states generalize the well known maximally entangled Bell states~\cite{HHHH09} to multi-party scenarios and play a central role in quantum information theory. AME states appear in many contexts including quantum error correction~\cite{KL97, Mazurek2020}, quantum secret sharing~\cite{Helwig2012} and even quantum gravity where, under the name \emph{perfect tensors}, they serve as fundamental building blocks of tensor-network models of the AdS/CFT correspondence~\cite{HaPPy_codes}.

A special subclass of AME states is formed by \emph{stabilizer states}, defined as simultaneous $+1$ eigenstates of an abelian subgroup of the generalized Pauli group~\cite{Gottesman1997,Hostens_2005}. Due to their algebraic structure and efficient classical description, the experimental realization of such states is generally more feasible compared with generic highly entangled ones. However, stabilizer states alone do not provide universal quantum computational power. States outside this class, called \emph{magic}, are necessary for universal quantum computation~\cite{BK05}.

Since the discovery of the first representative of an AME$(4,6)$ state~\cite{1stAME}, followed by a series of subsequent results~\cite{AME_BUV, Bruzda2023, Goedicke2023, Casas_2026}, it has been the subject of ongoing debate whether any such states belong to the class of stabilizer states. Although individual cases have been ruled out, a general consensus has been lacking. In the recent paper~\cite{Cha26}, the author presents a proof that a stabilizer AME$(4,6)$ state does not exist. The argument relies on structural properties of stabilizer states in dimension $6$, namely that every stabilizer state of local dimension six is locally unitarily equivalent to a tensor product of a qubit and a qutrit stabilizer state. Using this decomposition, it is shown that if a stabilizer AME$(4,6)$ state existed, it would necessarily contain an AME$(4,2)$ stabilizer state as a component. However, such states do not exist~\cite{HIGUCHI2000213, Huber_2018} which implies a contradiction. The work further generalizes this argument to arbitrary composite local dimensions, concluding that no stabilizer AME$(4,d)$ state exists for $d\equiv 2\,(\bmod\,4)$.

In this paper, we independently show a result that not only eliminates the AME$(4,6)$ stabilizer state, which originally motivated this study, but also excludes an infinite family of multi-partite stabilizer candidates. Specifically, there is no stabilizer AME$(4n, 2\ell)$ state for any $n,\ell\in\mathbb{N}_{+}=\{1,2,3,\ldots\}$. This method differs from the one presented in Ref.~\cite{Cha26} as it is based on a direct graph-state analysis and rules the existence of the stabilizer AME states at the level of their graph-theoretical structure. Alternative approaches described in both works enable the classification of various configurations. For example, Ref.~\cite{Cha26} excludes AME$(7,6)$ from the stabilizer class, since no AME$(7,2)$ state exists, whereas the present work addresses the case of AME$(4,4)$ state.

The paper is organized as follows. In Section~\ref{sec:graph-state}, we introduce the formalism of graph states and give a formal definition of AME states. Section~\ref{sec:main-result} presents the main result of this work -- a theorem establishing the non-existence of an infinite family of graphs AME states with $4n$ parties, for $n\in\mathbb{N}_+$ and even local dimension. In Section~\ref{sec:mixed-state} we extend the analysis beyond pure states and provide a construction of mixed $k$-uniform states~\cite{PhysRevA.100.032112} with optimal purity. In the final section, we summarize our findings and discuss their implications and possible directions for future research.

\section{Preliminaries}\label{sec:graph-state}

A graph $G$ is a tuple of a set of vertices $V$, which we denote by $\{1,2,\ldots,N\}=[N]$, and a set of edges $E$ connecting them. Here, we focus on multigraphs, which are a generalisation of a standard graph definition that allows for $E$ to be a multiset, i.e., a single edge can appear more than once. However, for convenience, we will be referring to multigraphs as graphs.

With every graph $G$ one can associate an adjacency matrix $\Gamma$ that encodes the connectivity of $G$. The element $\Gamma_{ij}$ is equal to the number of times that an edge $(i,j)$ appears in $E$. As we consider undirected graphs with no loops, it implies $\Gamma_{ij}=\Gamma_{ji}$ and $\Gamma_{ii}=0$.

A graph state $\ket{G}\in (\mathbb{C}^{d})^{\otimes N}$ is a unique state satisfying $g_i|G\rangle =|G\rangle$ for all $i\in[N]$, where
\begin{equation}
    g_i=X_i\prod_{j=1}^N Z_j^{\Gamma_{ij}},
\end{equation}
with $X_i=\mathbb{I}^{\otimes(i-1)}\otimes X\otimes \mathbb{I}^{\otimes (N-i)}$ and
\begin{equation}\label{gen-d-paulis}
    X=\sum_{j=0}^{d-1}|j\oplus_d 1\rangle\!\langle j|,\qquad Z=\sum_{j=0}^{d-1}\omega^j|j\rangle\!\langle j|, \qquad \omega = \exp\frac{2\pi i}{d}.
\end{equation}
The stabilizer $\mathbb{S}$ of $|G\rangle$ is a group generated by operators $g_i$; $\mathbb{S}=\langle g_1, g_2, \ldots, g_N\rangle$. The projector onto $|G\rangle$ is given by 
\begin{equation}\label{eq: projector}
|G\rangle\!\langle G|=\frac{1}{d^N}\prod_{i=1}^N\left(\sum_{j=0}^{d-1}g_i^j\right)=\frac{1}{d^N}\sum_{S\in\mathbb{S}}S.
\end{equation}

An $N$-partite state $\ket{\psi}$ is absolutely maximally entangled (AME state) if
\begin{equation}
\forall_{\substack{Q\subset[N]\\ |Q|=\lceil N/2\rceil}}\quad   {\rm Tr}_{Q}\big(|\psi\rangle\langle \psi|\big)=\frac{1}{d^{\lfloor N/2\rfloor}}\mathbb{I}^{\otimes\lfloor N/2\rfloor},
\end{equation}
where $\mathbb{I}$ is the $d\times d$ identity matrix, ${\rm Tr}_{Q}(\cdot)$ is the partial trace over $Q$, and $\lfloor\cdot\rfloor$, $\lceil\cdot\rceil$ stand for the floor and ceiling functions, respectively. To make the notation more concise, we will write 
\begin{equation}
 {\rm Tr}_{\lceil N/2 \rceil}\big(|\psi\rangle\langle \psi|\big)=\frac{1}{d^{\lfloor N/2\rfloor}}\mathbb{I}^{\otimes\lfloor N/2\rfloor}.
\end{equation}
Then, a graph state $|G\rangle$ is AME state if
\begin{equation}
 {\rm Tr}_{\lceil N/2 \rceil}\big(|G\rangle\langle G|\big)=\frac{1}{d^{\lfloor N/2\rfloor}}\mathbb{I}^{\otimes\lfloor N/2\rfloor}.
\end{equation}
which, by virtue of Eq. \eqref{eq: projector} is equivalent to
\begin{equation}
    {\rm Tr}_{\lceil N/2\rceil}\left(\sum_{S\in\mathbb{S}\setminus\{\mathbb{I}\}}S\right)=0.\label{AMEG}
\end{equation}

\section{Main Result}\label{sec:main-result}
Before proving the main result, we first introduce a simplified, necessary and sufficient condition for a graph state to be an AME state.
\begin{lemma}\label{lem: AME}
A graph state $\ket{G}$ stabilized by a stabilizer $\mathbb{S}$ is AME state iff
\begin{equation}
    \forall\,S\in\mathbb{S}\setminus\{\mathbb{I}\}\,\,\,:\,\,\,{\rm Tr}_{\lceil N/2\rceil}S=0.\label{AMES}
\end{equation}
\end{lemma}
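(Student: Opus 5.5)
The ``if'' direction is immediate. If every nontrivial stabilizer element has vanishing partial trace over every set $Q$ with $|Q|=\lceil N/2\rceil$, then the sum in Eq.~\eqref{AMEG} vanishes term by term. By Eq.~\eqref{eq: projector}, $\mathrm{Tr}_Q|G\rangle\!\langle G| = d^{-N}\,\mathrm{Tr}_Q\mathbb{I}^{\otimes N}$, which equals $d^{-\lfloor N/2\rfloor}\mathbb{I}^{\otimes\lfloor N/2\rfloor}$. So the effort goes into the ``only if'' direction: we must show that the sum in Eq.~\eqref{AMEG} can vanish only if each term vanishes individually.

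For this, fix $Q$ and write $\bar Q=[N]\setminus Q$. Every $S\in\mathbb{S}$ is, up to a phase, a tensor product $S=\bigotimes_{j}P_j$ of single-qudit Weyl operators $P_j=X^{a_j}Z^{b_j}$. Its partial trace factorizes as
\begin{equation*}
\mathrm{Tr}_Q S = \Big(\prod_{j\in Q}\mathrm{Tr}\,P_j\Big)\bigotimes_{j\in\bar Q}P_j .
\end{equation*}
Since $\mathrm{Tr}\,X^aZ^b=d\,\delta_{a0}\delta_{b0}$, this is nonzero exactly when $S$ acts trivially on $Q$. In that case it equals $d^{|Q|}$ times the phase times the restriction $S|_{\bar Q}$, which is a Weyl operator on $\bar Q$.

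So Eq.~\eqref{AMEG} reads $\sum_{S\in\mathbb{S}\setminus\{\mathbb{I}\},\,\mathrm{supp}\,S\subseteq\bar Q} c_S\, S|_{\bar Q}=0$ with all $c_S\neq 0$. Assume, for contradiction, that this set of $S$ is nonempty. I would use the fact that the $d^{2|\bar Q|}$ Weyl operators on $\bar Q$ are linearly independent, being orthogonal in the Hilbert–Schmidt inner product. Then the only way for the sum to vanish is that some distinct $S,S'$ have proportional restrictions $S|_{\bar Q}\propto S'|_{\bar Q}$. Since both act trivially on $Q$, this gives $S'\propto S$ globally. But $\mathbb{S}$ contains no two distinct elements differing by a phase: otherwise $S'S^{-1}=\lambda\mathbb{I}$ with $\lambda\ne1$ would lie in $\mathbb{S}$, and applying it to $|G\rangle$ gives $\lambda|G\rangle=|G\rangle$, a contradiction. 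For the same reason no $S\neq\mathbb{I}$ has restriction proportional to the identity. Hence all terms are independent, each $c_S$ must vanish, and this contradiction shows that no nontrivial $S$ is supported inside $\bar Q$. This is exactly the statement $\mathrm{Tr}_Q S=0$ for all $S\in\mathbb{S}\setminus\{\mathbb{I}\}$.

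The main obstacle is this independence step: the sum defining the projector runs over group elements, and one must exclude cancellations between elements that coincide up to phase on $\bar Q$. The argument above, which uses the stabilizer condition $S|G\rangle=|G\rangle$ to rule out nontrivial scalar multiples of the identity in $\mathbb{S}$, handles this. I would also check that $\mathbb{S}$ has exactly $d^N$ elements, so that the sum in Eq.~\eqref{eq: projector} has no repeated terms.
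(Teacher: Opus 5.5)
Your proof is correct. Its first step matches the paper's: a nonzero partial trace over $Q$ forces $S$ to act trivially on $Q$, so $S=d^{-|Q|}\,\mathrm{Tr}_Q(S)\otimes\mathbb{I}_Q$. The paper only asserts this, and your computation $\mathrm{Tr}\,X^aZ^b=d\,\delta_{a0}\delta_{b0}$ supplies the justification.

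You diverge from the paper at the step you call the main obstacle. The paper never uses linear independence of Weyl operators. It takes $A$ to be the nontrivial elements with nonzero partial trace. Through the factorization above, the vanishing of $\sum_{S\in A}\mathrm{Tr}_Q S$ lifts to $\sum_{S\in A}S=0$ on the full space. Applying this operator to $|G\rangle$ gives $|A|\,|G\rangle=0$, hence $A=\emptyset$. Every term contributes exactly $+|G\rangle$, so cancellation is ruled out simply by counting. Your two auxiliary checks (no two distinct elements of $\mathbb{S}$ differ by a phase, and no repeated terms in the projector sum) then become unnecessary.

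The trade-off is this. The paper's trick depends on all coefficients in the sum being equal to one. Your Hilbert--Schmidt argument shows that the restrictions $S|_{\bar Q}$ of the surviving elements are linearly independent, so it rules out vanishing of any combination with nonzero coefficients. Your version also defines the relevant set precisely, whereas the paper's ``$\exists\,A$'' needs $A$ to be the full set of such elements for the remaining terms to drop out. Both arguments ultimately rest on the stabilizer condition $S|G\rangle=|G\rangle$; in your case it is what excludes nontrivial scalar multiples of the identity from $\mathbb{S}$.
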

\begin{proof}
Clearly, Eq. \eqref{AMES} implies Eq. \eqref{AMEG}, and so it is sufficient to prove the implication in the opposite direction. Let us assume condition~\eqref{AMEG} holds and condition~\eqref{AMES} does not, so
\begin{equation}
    \underset{A\subset \mathbb{S}\setminus\{\mathbb{I}\}}{\exists} \,\,\,\,\underset{S\in A}{\forall} {\rm Tr}_{\lceil N/2\rceil}(S) \neq 0. \label{A exists}
\end{equation}
Since ${\rm Tr}_{\lceil N/2\rceil}(\underset{S\in A}{\sum}S) = 0$, then also 
\begin{equation}\label{dodatkowe}
 \underset{S\in A}{\sum}{\rm Tr}_{\lceil N/2\rceil}(S) = 0.
\end{equation}
From Eq.~\eqref{A exists} we can claim that
\begin{equation}
    \underset{S \in A}{\forall}\,\, S={\rm Tr}_{\lceil N/2\rceil}(S) \otimes \mathbb{I}^{\lceil N/2\rceil}
\end{equation}
and therefore, from~Eq.~\eqref{dodatkowe} it follows that $\underset{S\in A}{\sum}S=0$. This implies $\underset{S \in A}{\sum}S|G\rangle=0$, hence $A=\emptyset$ which contradicts Eq.~\eqref{A exists}.
\end{proof}
For completeness we prove the following fact.
\begin{fact}\label{fact}
Let $A\in M_{N\times N}(\mathbb{Z}_{d})$ be a matrix over a finite ring $\mathbb{Z}_{d}$, where $d$ is a composite number. If $\det(A)$ is not a unit in $\mathbb{Z}_{d}$ then $\ker(A)\neq \emptyset$.
\end{fact}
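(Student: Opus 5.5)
The plan is to reduce modulo a prime divisor of $d$, use ordinary linear algebra over the resulting field, and then lift the solution back to $\mathbb{Z}_d$ by multiplying with a suitable zero divisor. Throughout, $\ker(A)\neq\emptyset$ is read as the statement that $\ker(A)$ contains a \emph{nonzero} vector $v\in\mathbb{Z}_d^N$ with $Av=0$; the zero vector is always in the kernel.

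\emph{Step 1: find the prime.} Since $\det(A)$ is not a unit in $\mathbb{Z}_d$, we have $g=\gcd(\det(A),d)>1$. Pick a prime $p$ with $p\mid g$, so that $p\mid d$ and $p\mid\det(A)$. Here $\det(A)$ is represented by an integer in $\{0,\ldots,d-1\}$; the choice of representative is irrelevant because $p\mid d$.

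\emph{Step 2: a kernel vector over $\mathbb{F}_p$.} Reduction modulo $p$ is a ring homomorphism $\pi:\mathbb{Z}_d\to\mathbb{Z}_p$, which is well defined because $p\mid d$. The determinant is a polynomial in the matrix entries, so
\begin{equation}
\det\big(\pi(A)\big)=\pi\big(\det(A)\big)=0 \quad \text{in } \mathbb{Z}_p .
\end{equation}
Since $\mathbb{Z}_p$ is a field, $\pi(A)$ is singular. Hence there is a nonzero $u\in\mathbb{Z}_p^N$ with $\pi(A)u=0$. Take the representatives of the entries of $u$ in $\{0,\ldots,p-1\}$ and regard $u$ as an integer vector. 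Then every entry of $Au$, computed over the integers, is divisible by $p$, and at least one entry of $u$ is not divisible by $p$.

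\emph{Step 3: lift to $\mathbb{Z}_d$.} Define $v=(d/p)\,u \bmod d$. Each entry of $Av$ equals $(d/p)\,(Au)_i$, which is $d/p$ times a multiple of $p$, hence a multiple of $d$. Therefore $Av=0$ in $\mathbb{Z}_d$. Now take an index $i$ with $u_i\in\{1,\ldots,p-1\}$. Then $0<(d/p)u_i<d$, so $v_i\neq 0$ in $\mathbb{Z}_d$, and $v$ is a nonzero element of $\ker(A)$.

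There is no real obstacle in this argument. The only point needing care is Step 3: showing that the lifted vector is nonzero in $\mathbb{Z}_d$. This is exactly why the entries of $u$ are chosen as representatives in $\{0,\ldots,p-1\}$ before multiplying by $d/p$. An alternative route would use the adjugate identity $A\,\mathrm{adj}(A)=\det(A)\mathbb{I}$ to produce a kernel vector from a column of $\mathrm{adj}(A)$. However, that route breaks down when $\mathrm{adj}(A)$ vanishes, which is why the reduction modulo $p$ is preferable.
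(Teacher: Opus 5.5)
Your proof is correct, but it takes a different route from the paper's. The paper argues by counting. A non-unit determinant means $A$ has no inverse matrix, so $A$ is not a bijection of $\mathbb{Z}_d^N$. This step tacitly uses that the inverse of a bijective $\mathbb{Z}_d$-linear map is again given by a matrix, which would force $\det(A)$ to be a unit. Since $\mathbb{Z}_d^N$ is finite, $A$ is then not injective, and the difference of two vectors with the same image is a nonzero kernel element.

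That argument is shorter and works verbatim over any finite commutative ring. It is, however, non-constructive and says nothing about the kernel vector. Your reduction modulo a prime $p\mid\gcd(\det(A),d)$, followed by the lift $v=(d/p)\,u$, relies on the specific structure of $\mathbb{Z}_d$. In exchange it is explicit (Gaussian elimination over $\mathbb{F}_p$) and yields extra information, namely a kernel vector with $pv=0$.

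This extra information is useful in the paper's application. The parity argument in Theorem~\ref{thm} shows that some $\Delta_j$ is even, so you can take $p=2$ and obtain $\alpha=(d/2)\,u$ with $u\in\{0,1\}^{2n}$. The offending stabilizer element is then a product of powers $g_i^{d/2}$, and which qudits it leaves untouched is governed by $\Gamma \bmod 2$. This makes the obstruction visibly a qubit-level one, in the spirit of Ref.~\cite{Cha26}.

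You are also right to state explicitly that $\ker(A)\neq\emptyset$ must be read as the existence of a nonzero kernel vector. Note that neither argument uses the hypothesis that $d$ is composite.
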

\begin{proof}
If $A$ is invertible then $\det(A)$ is a unit in $\mathbb{Z}_{d}$. Conversely, since $\det(A)$ is not a unit, $A$ is not invertible, therefore $A$ is not a bijection. Since $A$ is an endomorphism over a finite set of elements, and every such injective endomorphism is surjective, then if $A$ is not bijective, then it is not injective. Therefore, there exist two distinct vectors $x_{1},x_{2}\in \mathbb{Z}_{d}^{N}$ such that $A x_{1}=Ax_{2}$. This immediately implies that $x_{1}-x_{2}$ is in the kernel of $A$, and so $\ker(A)\neq \emptyset$.
\end{proof}

With that, we can proceed to the main result of this work.
\begin{theorem}\label{thm}
There are no AME $N$-partite graph states of local dimension $d$ for $N=4 n$ for $n\in \mathbb{N}_{+}$ and even $d$.
\end{theorem}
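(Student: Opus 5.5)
The plan is to turn the AME condition into a linear-algebra condition on the adjacency matrix $\Gamma$ and then reduce modulo $2$. Fix a bipartition $[N]=A\cup B$ with $|A|=|B|=2n$; for $N=4n$ we have $\lceil N/2\rceil=\lfloor N/2\rfloor=2n$. Every $S\in\mathbb{S}$ has the form $S\propto\prod_i g_i^{x_i}$ with $x\in\mathbb{Z}_d^N$. Up to a phase it equals $X^{x}Z^{\Gamma x}$, and $S=\mathbb{I}$ exactly when $x=0$.

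First step. Take $x$ supported on $A$. Then the $X$-part of $S$ acts only on $A$, and the $Z$-part on $B$ is $Z^{\Gamma_{BA}x_A}$. Suppose $x_A\neq0$ satisfies $\Gamma_{BA}x_A=0$. Then $S$ acts as the identity on $B$ and as a nontrivial Pauli operator on $A$. Hence ${\rm Tr}_B S=d^{2n}\,S|_A\neq0$, and $S\neq\mathbb{I}$. By Lemma~\ref{lem: AME} the state $\ket{G}$ is then not AME.

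So an AME graph state requires $\ker\Gamma_{BA}=\{0\}$ over $\mathbb{Z}_d$ for every balanced bipartition. Fact~\ref{fact} (read as asserting a nonzero kernel element) then forces $\det\Gamma_{BA}$ to be a unit in $\mathbb{Z}_d$ for every such bipartition. Since $d$ is even, a unit of $\mathbb{Z}_d$ is odd, so $\det\Gamma_{BA}\equiv1\pmod 2$.

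Second step: reduce mod $2$. Let $\bar\Gamma=\Gamma\bmod 2$. This is a symmetric $0/1$ matrix with zero diagonal, i.e.\ a simple graph on $4n$ vertices, and $\det\bar\Gamma_{BA}=\det\Gamma_{BA}\bmod 2=1$ for every balanced cut. By the converse of the first step for $d=2$, where invertibility over the field $\mathbb{Z}_2$ gives maximally mixed reductions (the standard graph-state criterion), $\bar\Gamma$ defines a qubit AME$(4n,2)$ graph state. Qubit AME states do not exist for $N=4$~\cite{HIGUCHI2000213} nor for $N\ge 8$~\cite{Huber_2018}, which covers all $N=4n$. This is a contradiction.

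The main obstacle is that leaning on the full qubit nonexistence theorem may be considered unsatisfying. I would therefore first try a self-contained parity argument for the statement that a symmetric zero-diagonal $\mathbb{Z}_2$-matrix of size $4n$ cannot have all balanced off-diagonal blocks invertible. One candidate is to sum $\det\bar\Gamma_{BA}$ over all $\binom{4n}{2n}$ cuts and show the total is even while each term is $1$. Each permutation term pairs with its transpose, and the symmetry $\Gamma_{AB}=\Gamma_{BA}^T$ swaps $A$ and $B$. So the count should be computed via the Pfaffian-like expansion, and $\binom{4n}{2n}$ is even. If this double-counting does not close cleanly, I will fall back on the cited qubit results.
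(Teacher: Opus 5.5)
Your proof is correct, but after the first step it takes a different route from the paper. Your first step matches the paper's: a nonzero $x_A$ with $\Gamma_{BA}x_A=0$ over $\mathbb{Z}_d$ gives a nontrivial stabilizer element supported on $A$, so every balanced block must have a unit, hence odd, determinant. You then reduce modulo $2$. This makes $\Gamma \bmod 2$ the adjacency matrix of a qubit AME$(4n,2)$ graph state, and you import the non-existence of qubit AME states (Higuchi--Sudbery for $N=4$, the Rains/Scott shadow-enumerator bounds for $N\ge 8$).

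The paper stays self-contained. It uses only the $2n+1$ cuts $A_j=[2n-1]\cup\{j\}$, $j=2n,\dots,4n$, and expands each $\Delta_j=\det\Gamma_{B_jA_j}$ along the column of $j$. By the symmetry of $\Gamma$ and of the $(2n-1)\times(2n-1)$ minors $\Delta_{rj}$, each product $\Gamma_{rj}\Delta_{rj}$ appears twice with opposite signs, so $\sum_j(-1)^j\Delta_j=0$ identically over $\mathbb{Z}$. An odd number of odd integers cannot sum to zero, so some $\Delta_j$ is even.

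What each buys: the paper's argument is elementary, and at $d=2$ it reproves from scratch the qubit case for $N=4n$, which is exactly what you outsource to deep results. Your reduction is more informative in return. It shows the obstruction is purely a mod-$2$ phenomenon, and it rules out even-$d$ graph AME states for every even $N\notin\{2,6\}$ (e.g.\ $N=10$), which the paper does not cover. If you replace the determinant step by lifting a mod-$2$ kernel vector $\bar x$ to $(d/2)\bar x \in \mathbb{Z}_d^{A}$, it also covers odd $N\ge 7$, such as $N=7$.

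Your tentative self-contained fallback would not work as described. Summing $\det\bar\Gamma_{BA}$ over all $\binom{4n}{2n}$ cuts gives an even number of terms, each equal to $1$. An even total is then automatic and gives no contradiction. Even over unordered cuts the count $\binom{4n}{2n}/2$ is already even for $n=3$ (it equals $462$). The idea you were missing is the paper's choice of an \emph{odd} family of cuts sharing $2n-1$ fixed vertices, for which a single Laplace expansion makes the alternating sum vanish. Since you said you would fall back on the cited qubit results, this does not affect the validity of your main argument.
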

\begin{proof}
We will show that for each graph $G$, the number of particles $N$ divisible by $4$, and even local dimension $d$, there exists a stabilizer operator of a graph state $\ket{G}$ that contains $N/2$ identity operators, implying that $\ket{G}$ is not AME state due to Lemma \ref{lem: AME}.

Let $N=4n$ and let $\mathbb{B}$ denotes a subset of stabilizer operators of $\ket{G}$ where each $S\in \mathbb{B}$ can be expressed as
\begin{equation}\label{eq: S_definition}
S=\left(\prod_{i=1}^{2n-1}g_{i}^{\alpha_{i}}\right)g_{j}^{\alpha_{j}},
\end{equation}
with $j\geqslant 2n$ and $\alpha_i\in [d-1]$ for all $i\in [2n-1]\cup \{j\}$.
To show that $\ket{G}$ is not AME, we will prove that 
\begin{equation}
\exists\,S\in\mathbb{B} : {\rm Tr}_{|2n|}(S)\neq 0.
\label{S condition}
\end{equation}
To satisfy this condition, one must find $S\in \mathbb{B}$ that acts with $\mathbb{I}$ on $2n$ qudits. Notice that from Eq. \eqref{eq: S_definition} it follows that $S$ will never act with $\mathbb{I}$ on the first $2n-1$ qudits, nor on the $j^{\textrm{th}}$ qudit. This is because, on these qudits, $S$ acts either with a nontrivial power of $X$. Therefore, $S$ can only act with $\mathbb{I}$ on a qudit $r$ if $r\notin [2n-1]\cup\{j\}$.

Then, to satisfy Eq. \eqref{S condition} for a given $S$ it is equivalent to satisfy the following system of equations
\begin{equation}\label{eq: set_S}
\forall_{r\notin [2n-1]\cup\{j\}}\;\sum_{i=1}^{2n-1}\Gamma_{ir}\alpha_{i}+\Gamma_{jr}\alpha_{j}=0 \pmod{d}.
\end{equation}
Let us label the determinant of this system of equations as $\Delta_{j}$, where $\Gamma_{ir}$ and $\Gamma_{jr}$ are the entries of the associated matrix, and $\alpha_{i},\alpha_{j}$ are the variables. To show that $\ket{G}$ is not AME state for any $G$ we need to prove that for each $G$ there exists a set of equations as above with a nontrivial solution, i.e., the kernel of the matrix associated with $\Delta_{j}$ is nontrivial. And so, by virtue of Fact \ref{fact}, we need to show that $\Delta_{j}$ is not a unit in $\mathbb{Z}_{d}$.

To this end, let us consider the following sum
\begin{equation}\label{eq: sum_delta}
\sum_{j=2n}^{4n}(-1)^{j} \Delta_{j}.
\end{equation}
Using the Laplace expansion on the last column of each matrix associated with $\Delta_{j}$, we get
\begin{equation}\label{eq: laplace}
\Delta_{j}= \sum_{r\notin [2n-1]\cup\{j\}}(-1)^{2n+r -H(j-r)} \Gamma_{rj}\Delta_{rj},
\end{equation}
where $H(\cdot)$ is the Heaviside step function and $\Delta_{rj}$ is a determinant of a minor constructed by deleting a column corresponding to qudit $j$ and a row corresponding to qudit $r$. To see why $H(j-r)$ appears here, consider that $r$ corresponds to the $(r-2n-1)^{\textrm{th}}$ equation in Eq.~\eqref{eq: set_S} if $j>r$ or to the $(r-2n)^{\textrm{th}}$ equation if $j<r$.

Substituting Eq. \eqref{eq: laplace} into Eq. \eqref{eq: sum_delta} we obtain
\begin{equation}
\sum_{j=2n}^{4n}(-1)^{j} \Delta_{j}=\sum_{\substack{j,r=2n\\ r\neq j}}^{4n}(-1)^{j+2n+r -H(j-r)} \Gamma_{rj}\Delta_{rj}.
\end{equation}
Notice that $\Gamma_{ab}=\Gamma_{ba}$ and $\Delta_{ab}=\Delta_{ba}$ for all $a,b\geqslant 2m$. Therefore, each $\Gamma_{ab}$ appears in the above equation exactly twice, multiplied by the same minor determinant $\Delta_{ab}$. However, one of these terms is multiplied by $(-1)^{2n+a+b}$ while the other by  $(-1)^{2n+a+b-1}$, leading to the conclusion that 
\begin{equation}
\sum_{j=2n}^{4n}(-1)^{j} \Delta_{j}=0,
\end{equation}
which can also be written as
\begin{equation}\label{eq: contradiction}
\sum_{j=0}^{n}\Delta_{2j+2n}=\sum_{j=1}^{n}\Delta_{2j+2n-1}.
\end{equation}
Let us assume that all determinants are a unit in $\mathbb{Z}_{d}$. This implies that all of the determinants are odd, and so if the right-hand side of Eq.~\eqref{eq: contradiction} is even, then the left-hand side is necessarily odd (and vice versa), which is a contradiction. This implies that one of the determinants is not a unit in $\mathbb{Z}_{d}$, meaning that Eq.~\eqref{eq: set_S} has a nontrivial solution for some $j\geqslant 2n$, and therefore $\ket{G}$ is not AME.
\end{proof}
It is crucial to distinguish between canonical graph states defined over the ring $\mathbb{Z}_d$ and general stabilizer states, which may exploit the structure of finite fields $\mathbb{F}_{q}$ when the local dimension $q$ is a prime power. For instance, while Theorem~\ref{thm} proves that a canonical graph state AME$(4,4)$ cannot exist over $\mathbb{Z}_4$ (due to the even-parity contradiction), a stabilizer AME$(4,4)$ state is well known to exist when constructed over $\mathbb{F}_4$ (e.g., by grouping 8 qubits~\cite{Helwig2013}). However, this algebraic construction is impossible for local dimensions that are not prime powers. For $d \equiv 2 \pmod 4$, in particular for $d=6$, no finite field exists, and the decomposition includes a $\mathbb{Z}_2$ component. Since all qubit stabilizer states are locally unitarily equivalent to qubit graph states~\cite{Gottesman1997}, this imposes the ring constraints, meaning the non-existence of graph states naturally extends to the non-existence of all stabilizer states in these dimensions. This conclusion, combined with the results from Ref.~\cite{Cha26}, allows us to state the following:

\textbf{Corollary 1.} \textit{There are no N-partite stabilizer AME states of local dimension $d \equiv 2 \pmod 4$, for $N=4n$ and $n \in \mathbb{N}_+$.}

Having established this strict impossibility for pure states, we now turn to the question of finding mixed AME states of optimal purity.

\section{\texorpdfstring{Mixed $k$-uniform states of optimal purity via composite construction}{}}\label{sec:mixed-state}

The non-existence of pure stabilizer AME states for $N=4n$ and $d\equiv 2\,(\bmod\,4)$ naturally raises the question about mixed AME (or $k$-uniform) states of maximal purity. For a stabilizer state of $N$ qudits generated by $m$ independent commuting generators, the purity is given by $\textrm{Tr}(\rho^2) = d^{m-N}$ \cite{PhysRevA.100.032112}. In the composite scenario, based on the structural decomposition of stabilizer states \cite{Hostens_2005, Looi_2011}, we utilize the factorization $d = \prod q_i$, where $q_i$ are the prime-power factors of $d$. The total purity of the composite state $\rho = \bigotimes \rho_i$ is bounded by the product of the maximal purities achievable for each factor $q_i$ as
\begin{equation}
    \textrm{Tr}(\rho^2) = \prod_{i} \textrm{Tr}(\rho_i^2) = \prod_{i} q_i^{m_i - N}.
\end{equation}
This formula reveals that the achievable purity is limited by the ``weakest'' prime-power factor for a given $N$. 

In the canonical graph state formalism over the ring $\mathbb{Z}_d$, avoiding the parity contradiction of Eq.~\eqref{eq: contradiction} requires discarding at least one generator, yielding a stabilizer group of rank $m \leqslant N-1$ and a purity of at most $1/d$. However, one can achieve a significantly higher purity by relaxing the strict constraints of a single graph-state adjacency matrix in favor of the composite stabilizer approach.

To illustrate this, consider the specific case of four quhexes,  $d=6=2\times 3$). We may treat the $6$-dimensional system as a bipartite system composed of qubits and qutrits. Pure AME$(4,3)$ state of four qutrits exists and is stabilized by a full-rank group of $m_3=4$ generators, yielding a maximal purity $1$. Since no pure AME$(4,2)$ state of four qubits exists~\cite{HIGUCHI2000213}, one can construct an optimal $2$-uniform mixed state generated by $m_2=3$ generators with a purity of $1/2$~\cite{PhysRevA.100.032112}. Specifically, this optimal qubit state can be generated by the independent commuting set $\mathcal{G}_2 = \{XXXX, YYYY, \mathbb{I}XYZ\}$, while the pure qutrit state is generated by canonical graph generators $\mathcal{G}_3 = \{g_1, g_2, g_3, g_4\}$ (not specified here). The tensor product of these states, 
\begin{equation}\label{eq: tensor_rho}
    \rho = \rho_{\text{qubit}}^{(m_2=3)} \otimes \ket{\psi}\!\!\bra{\psi}_{\text{qutrit}}^{(m_3=4)},
\end{equation}
forms a $2$-uniform state of four quhexes stabilized by the composite group $\mathcal{G} = \mathcal{G}_2 \otimes \mathcal{G}_3$ with $m=7$ generators. The purity of this composite state is $1 \times \frac{1}{2} = \frac{1}{2}$, which provides an improvement over the $\mathbb{Z}_6$ graph state limit of $1/6$. 

For larger $N$, where pure AME states may not exist, the purity will scale according to the best known $k$-uniform mixed states in their respective prime-power dimensions, but the advantage over the canonical $\mathbb{Z}_d$ formalism is strictly preserved.

The state $\rho$ in Eq.~\eqref{eq: tensor_rho} is genuinely multipartite entangled (GME) with respect to the four quhexes and achieves optimal purity. However, it possesses a trivial tensor product structure between the internal qubit and qutrit subsystems. To remove this explicit separability, we apply a local unitary operation, specifically a controlled-$Z$ gate (cf.~Eq.\eqref{gen-d-paulis}) acting individually on the degrees of freedom of each quhex:
\begin{equation}
U_{\texttt{CZ}} = \bigotimes_{i=1}^4 \left( \sum_{j \in\{0,1\}} |j\rangle\langle j|_2^{(i)} \otimes \left(Z_3^{(i)}\right)^j \right).
\end{equation}
Since $U_{\texttt{CZ}}$ acts locally with respect to the four parties that define AME state, the transformed state $\rho' = U_{\texttt{CZ}} \rho U_{\texttt{CZ}}^\dagger$ preserves its $2$-uniformity and optimal purity. At the same time, this transformation couples the stabilizer generators across the internal cut, yielding a state that is non-separable between the qubit and qutrit subsystems. This provides a possibility of physical realization of an optimal-purity mixed AME state in composite dimensions, in comparison to the algebraic limitations inherent to canonical $\mathbb{Z}_d$ graph states.

\section{Summary}

We have shown that a large class of qudit graph states, namely those with the number of qudits divisible by $4$ and an even local dimension, cannot be absolutely maximally entangled. Moreover, combining our result with a recent work~\cite{Cha26}, it implies a stronger statement that no pure stabilizer state can be absolutely maximally entangled state for configurations in which the local dimension $d \equiv 2\,(\bmod\,4)$ and the number of qudits is a multiple of $4$.

Furthermore, going beyond the canonical $\mathbb{Z}_d$ ring structure in favor of a composite approach, we provide a general construction for mixed $k$-uniform states whose purity depends on the optimal stabilizer rank of their prime-power factors. For the specific case of four quhexes ($N=4, d=6$), this yields an optimal purity of $1/2$. By applying internal local unitary operations, the qubit and qutrit degrees of freedom become coupled, removing the explicit separability of the state. This demonstrates that while pure stabilizer AME states are algebraically forbidden in particular scenarios, mixed states of optimal purity remain physically accessible.

Stabilizer states offer a simplified framework for state construction, as they can be generated using Clifford circuits. In many multipartite qudit systems a significant fraction of known AME states are stabilizer ones, see Ref.~\cite{AME-table}. The states from this class are used to construct pure quantum error-correcting codes, which includes all states that are derived from maximum-distance separable codes~\cite{GR15_MDSC}. The internal classification of AME states has been partially addressed in Ref.~\cite{Casas_2026}, however, in general, the problem of the existence and structure of stabilizer or graph AME states is not fully understood.

It remains open to determine how far the present exclusion results extend beyond the $4n$-party setting. It would be beneficial to study if similar obstructions appear for other numbers of parties and local dimensions. Such an analysis would help establish a better boundary between stabilizer structures and non-stabilizer forms of multipartite quantum entanglement.

\section{Acknowledgments}
We acknowledge discussions with Zahra Raissi, Markus Grassl, Felix Huber and Hyunho Cha. This work is supported by the National Science Centre (Poland) through the SONATA BIS project No. 019/34/E/ST2/00369. The Center for Quantum-Enabled Computing project is carried out within the International Research Agendas programme of the Foundation for Polish Science co-financed by the European Union under the European Funds for Smart Economy 2021--2027 (FENG).

\bibliography{bibliography}

\end{document}